\newcommand{\Id}{\mathtt{I}}
\newcommand{\rank}[1]{\ensuremath{\mathrm{rank}\left(#1\right)}}
\newcommand{\norm}[1]{\ensuremath{\left\| #1 \right\|}}
\newtheorem{problem}{Problem}
\newtheorem{theorem}{Theorem}
\newtheorem{lemma}{Lemma}
\newenvironment{proof}[1][Proof]{\begin{trivlist}
\item[\hskip \labelsep {\bfseries #1}]}{\end{trivlist}}
\newcommand{\qed}{\nobreak \ifvmode \relax \else
      \ifdim\lastskip<1.5em \hskip-\lastskip
      \hskip1.5em plus0em minus0.5em \fi \nobreak
      \vrule height0.5em width0.5em depth0.1em\fi}
\begin{document}

\title{\LARGE\bf Advancing Matrix Completion by Modeling Extra Structures beyond Low-Rankness}
\author{ \bf{Guangcan Liu} \\
         Department of Statistics and Biostatistics\\
         Department of Computer Science\\
         Rutgers University\\
         Piscataway, NJ 08854, USA\\
       \texttt{guangcan.liu@rutgers.edu}\\\\
       \and
         \bf{Ping Li} \\
         Department of Statistics and Biostatistics\\
         Department of Computer Science\\
       Rutgers University\\
        Piscataway, NJ 08854, USA\\
       \texttt{pingli@stat.rutgers.edu}}

\date{}

\maketitle
\begin{center} {\bf Abstract } \end{center}

A well-known method for completing low-rank matrices based on convex optimization has been established by Cand{\`e}s and Recht~\cite{Candes:2009:math}. Although theoretically complete, the method may not entirely solve the low-rank matrix completion problem. This is because the method captures only the low-rankness property which gives merely a constraint that the data points locate on some low-dimensional subspace, but generally ignores the extra structures which specify in more detail how the data points locate on the subspace. Whenever the geometric distribution of the data points is not uniform, the coherence parameters of data might be large and, accordingly, the method might fail even if the latent matrix to recover is fairly low-rank. To better handle non-uniform data, in this paper we propose a model termed Low-Rank Factor Decomposition (LRFD), which imposes an additional restriction that the data points must be represented as linear combinations of the bases in a given dictionary. We show that LRFD can well handle non-uniform data, provided that the dictionary is configured properly: We mathematically prove that if the dictionary itself is low-rank then LRFD is immune to the coherence parameters which might be large on non-uniform data. This provides an elementary principle for learning the dictionary in LRFD and, naturally, leads to a practical algorithm for advancing matrix completion. Extensive experiments on randomly generated matrices and motion datasets show encouraging results.

\newpage\clearpage

\section{Introduction} In modern applications such as \emph{structure from motion}, very often one needs to restore the missing entries of a matrix, i.e., \emph{matrix completion}~\cite{johnson:1990:mc}. In general, given no presumptions about the nature of the entries, matrix completion is virtually impossible as the missing entries can be of arbitrary values. Due to the low-rankness nature of today's high-dimensional data, a commonly adopted assumption is that the latent matrix we want to recover is fairly low-rank, resulting in the so-called \emph{low-rank matrix completion} problem, which is formulated as follows:
\begin{problem}[Low-Rank Matrix Completion]\label{pb:lmc}
Suppose we have a data matrix $X\in\mathbb{R}^{m\times{}n}$, which is known only on a fraction of its entries:
\begin{eqnarray*}
[X]_{ij}=[L_0]_{ij}, \forall{}(i,j)\in\Omega,
\end{eqnarray*}
where $L_0\in\mathbb{R}^{m\times{}n}$ is a low-rank matrix each column of which is a data point lying on some low-dimensional subspace, $[\cdot]_{ij}$ denotes the $(i,j)$th entry of a matrix, and $\Omega\subset{}\{1,\cdots,m\}\times\{1,\cdots,n\}$ is an index set consisting of the locations of the observed entries. Given the incomplete matrix $X$ (and the index set $\Omega$), can we exactly recover the latent matrix $L_0$ in a scalable way?
\end{problem}

There is a large community that explores the above problem using various statistical tools, e.g.,~\cite{Candes:2009:math,CandesPIEEE,Mohan:2010:isit,Recht2008,Negahban:2012:JMLR,rahul:jlmr:2010,Srebro05generalizationerror,wang:arxiv:2014,liuj:tpami:2013,chen:icml:2011}. Of all those notable contributions, the most fundamental and significant one is probably the convex optimization based method established by Cand{\`e}s and Recht~\cite{Candes:2009:math}. For the ease of presentation, we shall call this method as ``CONO" (CONvex Optimization) for short. CONO tells us for sure that, when the low-rank matrix $L_0$ is meanwhile \emph{incoherent} (i.e., with low coherence parameters), $L_0$ can be exactly recovered by using the following convex, parameter-free, and potentially scalable program:
\begin{eqnarray}\label{eq:cono}
\min_{L}\|L\|_*, &\textrm{s.t.}&\mathcal{P}_{\Omega}(X-L)=0,
\end{eqnarray}
where $\|\cdot\|_*$ is the \emph{nuclear norm}~\cite{phd_2002_nuclear,siam_2010_minirank} of a matrix, i.e., the sum of the singular values of a matrix, and $\mathcal{P}_{\Omega}$ denotes the orthogonal projection onto the linear space of matrices supported on $\Omega$. Besides of its completeness in theory, CONO also has good empirical performance and is therefore widely regarded as a milestone in the history of matrix completion.

Nevertheless, CONO cannot be the best solution to the low-rank matrix completion Problem~\ref{pb:lmc}. Indeed, the method might be unsuccessful even when the latent matrix $L_0$ is strictly low-rank and the locations of missing entries are selected uniformly at random. This is because CONO captures only the low-rankness property of $L_0$, but essentially ignores the \emph{extra structures} which are critical to the success of recovery: Given the low-rankness constraint that the data points (i.e., columns vectors of $L_0$) locate on a low-dimensional subspace, it is quite normal that the data may have some extra structures which specify in more detail \emph{how} the data points locate on the subspace, as illustrated in Figure~\ref{fig:extra}. Notice that the extra structures are essentially nonlinear and hard to parameterize. Therefore, we shall not adopt parametric models to describe and explore each extra structure in a particular way, but instead generally divide all cases shown in Figure~\ref{fig:extra} into two categories:
\begin{itemize}
\item[1)] \textbf{Uniform data}: The data points \emph{uniformly} locate on a low-dimensional subspace, as shown in Figure~\ref{fig:extra}(a).
\item[2)] \textbf{Non-uniform data}: The data points \emph{non-uniformly} locate on a low-dimensional subspace, as shown in Figure~\ref{fig:extra}(b) $\sim$ Figure~\ref{fig:extra}(e).
\end{itemize}
\begin{figure}[h!]
\begin{center}
\includegraphics[width=0.95\textwidth]{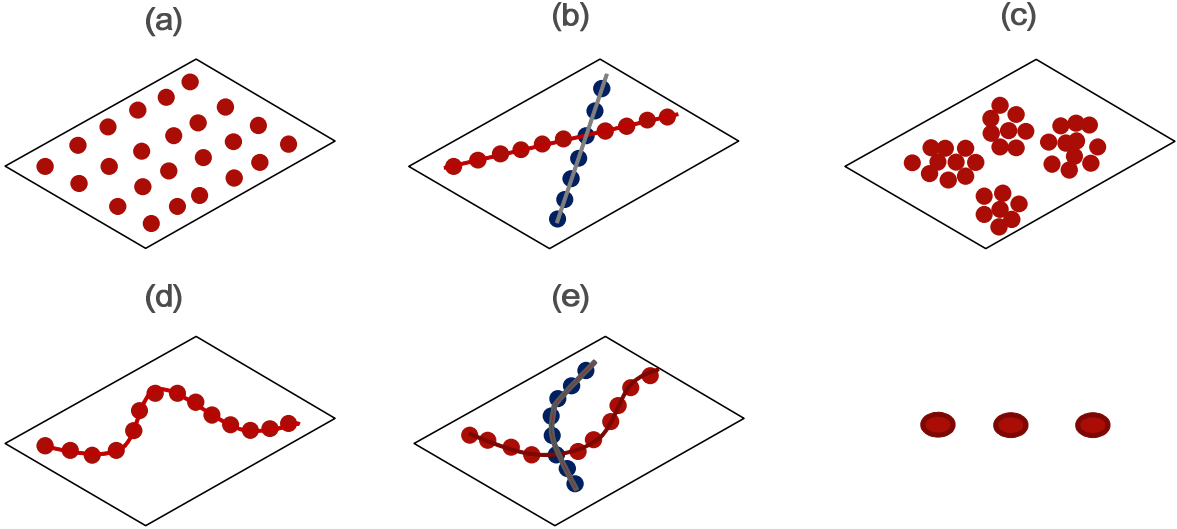}
\caption{Illustrating the extra structures beyond low-rankness. Each column of the data matrix $L_0$ is a data point. Given the constraint that $L_0$ is low-rank, i.e., the data points locate on a low-dimensional subspace, more specific situation could be: (a) The data points uniformly distribute on the subspace, (b) the data points have a mixture structure of multiple ``small'' subspaces inside the ``large'' subspace, (c) the data points form multiple ``ball-like'' clusters, (d) the data points lie on a nonlinear manifold inside the subspace, (e) the data points follow a mixture structure of multiple nonlinear manifolds inside the subspace, etc.}\label{fig:extra}
\end{center}
\end{figure}

For the uniform case as in Figure~\ref{fig:extra}(a), CONO is probably the best method for low-rank matrix completion. Nevertheless, uniform data actually seldom exist in reality and CONO might not work well on non-uniform data. The reason is that the coherence parameters of non-uniform data might be large, and thus CONO might fail to recover $L_0$ even when $L_0$ is fairly low-rank. Even more, non-uniform data are ubiquitous in realistic areas such as computer vision. For example, it is known that the data matrix of trajectories of motion objects provably follows a mixture structure of multiple subspaces as in Figure~\ref{fig:extra}(b)~\cite{ijcv_1998_factor}. Anyway, uniform data is after all a special case of non-uniform data, and thus it is undoubtedly significant to study the matrix completion problem in the context of non-uniform data.

To accomplish an advanced solution to the low-rank matrix completion Problem~\ref{pb:lmc} in the context of non-uniform data, in this paper we propose to consider a generalized version of \eqref{eq:cono}, called as \emph{Low-Rank Factor Decomposition} (LRFD) for the convenience of citation:
\begin{eqnarray}\label{eq:mc}
\min_{Z}\|Z\|_*, &\textrm{s.t.}&\mathcal{P}_{\Omega}(X-AZ)=0,
\end{eqnarray}
where $A\in\mathbb{R}^{m\times{}d}$ is a dictionary matrix constructed or learnt in advance (the choice of the dictionary size $d$ is immaterial). Note here that, unlike in CONO, in our LRFD it is $AZ^*$ that reconstructs $L_0$ (assume $Z^*$ is the minimizer to \eqref{eq:mc}). It is easy to see that \eqref{eq:mc} falls back to \eqref{eq:cono} when $A = \Id$ (identity matrix). So it could be regarded that LRFD is a generalization of CONO.

To well handle non-uniform data, the dictionary matrix $A$ should be chosen properly. We shall mathematically prove that if the dictionary itself is low-rank then LRFD is immune to the coherence parameters which might be large on non-uniform data. This provides an elementary principle for learning the dictionary in LRFD. Subsequently, we devise a practical algorithm to obtain proper dictionaries in unsupervised environments. Our extensive experiments on randomly generated matrices and motion datasets show encouraging results. In summary, our contributions include:
\begin{itemize}
\item[$\bullet$] We propose to improve low-rank matrix completion by modeling the extra structures possibly existing in data. To our knowledge, we are the first to pursue this direction in the community of matrix completion. Furthermore, we establish a generic model termed LRFD, some elementary theories and a practical algorithm for resolving the problem of restoring a low-rank (yet non-uniform) matrix from its incomplete versions.

\item[$\bullet$] The idea of replacing a variable $L$ with the product of two variables, saying $AZ$, is essentially the spirit of \emph{matrix factorization} which has been discussed for long, e.g.,~\cite{nips:WeimerKLS07,tpami_2013_lrr,Srebro05generalizationerror,Proc:Wang_Li_SDM10,chen:icml:2012}. In that sense, the investigations of this paper help to understand why the factorization techniques could be effectual.
\item[$\bullet$] While the concept of \emph{coherence} is now standard and widely used in various literatures, e.g.,~\cite{Candes:2009:JournalACM,xu:2010:nips}, there is a lack of studies about the \emph{physical} regime that affects the behaviors of coherence parameters. This paper shows that the coherence parameters are related in nature to the geometric distribution of data points: The more non-uniformly the data points distribute, the larger the coherence parameters could be.
\end{itemize}
\section{Summary of Main Notations}\label{sec:notation}
Capital letters such as $M$ are used to represent matrices, and accordingly, $[M]_{ij}$ denotes its $(i,j)$th entry. The particular symbol $(\cdot)^+$ denotes the Moore-Penrose pseudo-inverse of a matrix, i.e., $M^+=V_M\Sigma_M^{-1}U_M^T$ for any matrix $M$ with SVD\footnote{In this paper, SVD always refers to skinny SVD. For a rank-$r$ matrix $M\in\mathbb{R}^{p\times{}q}$, its SVD is of the form $U_M\Sigma_MV_M^T$, where $U_M\in\mathbb{R}^{p\times{}r},\Sigma_M\in\mathbb{R}^{r\times{}r}$ and $V_M\in\mathbb{R}^{q\times{}r}$.} $U_M\Sigma_MV_M^T$. Letters $U$, $V$, $\Omega$ and their variants (complements, subscripts, etc.) are reserved for column space, row space and index set, respectively. We shall abuse the notation $U$ to denote the linear space spanned by the columns of $U$. The projection onto the column space, $U$, is denoted by $\mathcal{P}_U$ and given by $\mathcal{P}_U(M)=UU^TM$. We shall also abuse the notation $\Omega$ to denote the linear space of matrices supported on $\Omega$, and use $\mathcal{P}_{\Omega}$ and $\mathcal{P}_{\Omega^{\bot}}$ to respectively denote the projections onto $\Omega$ and $\Omega^c$ (i.e., the complement of $\Omega$) such that $\mathcal{P}_{\Omega}+\mathcal{P}_{\Omega^{\bot}}=\mathcal{I}$, where $\mathcal{I}$ is the identity operator.

Three types of matrix norms are used in this paper, and they are all functions of the singular values: 1) the operator norm or 2-norm (i.e., the largest singular value) denoted by $\|M\|$, 2) the Frobenius norm (i.e., the square root of the sum of squared singular values) denoted by $\|M\|_F$ and 3) the nuclear norm or trace norm (i.e., the sum of singular values) denoted by $\|M\|_*$. The only used vector norm is the $\ell_2$ norm, which is denoted by $\|\cdot\|_2$.

The letter $\mu$ and its variants are reserved to denote the coherence parameters of a matrix. We also reserve two lowercase letters, $m$ and $n$, to respectively denote the data dimension and the number of data points, and we use the following two symbols throughout this paper:
\begin{eqnarray*}
n_1 = \max(m,n) &\textrm{and}&n_2 = \min(m,n).
\end{eqnarray*}

\section{Analysis, Theory and Algorithm}
In this section, we shall try to answer the following two questions: (1) Why CONO might not work well on non-uniform data ? (2) How to choose the dictionary matrix $A$ in LRFD?
\subsection{Why CONO Might Fail on Non-Uniform Data?}
To get a definite answer to the question highlighted above, we introduce below the concept of \emph{coherence} and investigate the physical regime that affects the behaviors of coherence parameters.

The definition of coherence adopted by this paper is the same as~\cite{Candes:2009:math,Candes:2009:JournalACM}. For a matrix $M\in\mathbb{R}^{p\times{}q}$ with rank $r$ and SVD $U_M\Sigma_MV_M^T$, there are two coherence parameters, $\mu_1$ and $\mu_2$, which are useful to characterize the statistical properties of the matrix. The first coherence parameter, $1\leq\mu_1\leq{}p$, which captures the statistical properties of the \emph{column space} identified by $U_M$, is defined as
\begin{eqnarray}\label{eq:u1}
\mu_1(M) = \frac{p}{r}\max_i\|U_M^Te_i\|_2^2,
\end{eqnarray}
where $e_i$ denotes the $i$th standard basis. The second coherence parameter, $1\leq\mu_2\leq{}q$, which characterizes the \emph{row space} identified by $V_M$, is defined as
\begin{eqnarray}\label{eq:u2}
\mu_2(M) = \frac{q}{r}\max_i\|V_M^Te_i\|_2^2.
\end{eqnarray}

\begin{figure}[h!]
\begin{center}
\includegraphics[width=0.95\textwidth]{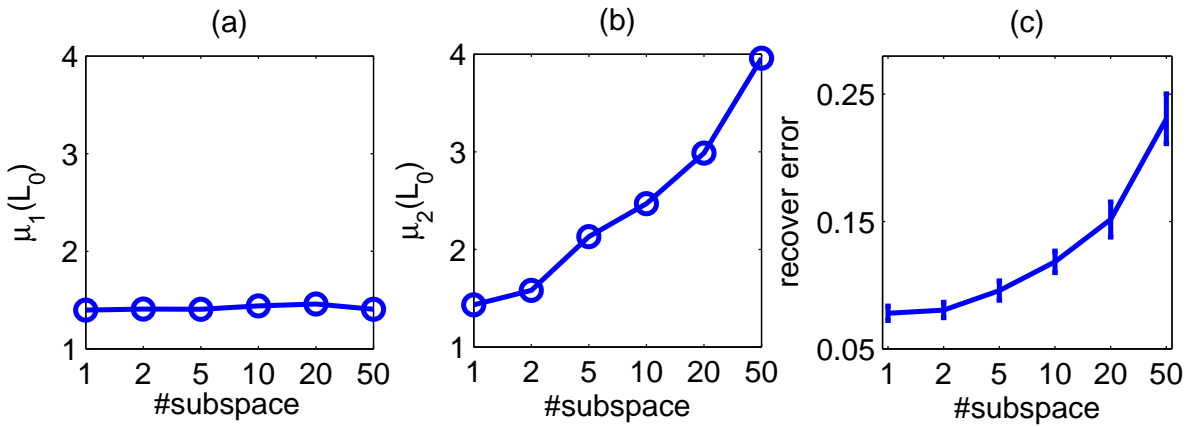}
\caption{Exploring the properties of coherence parameters, using randomly generated matrices. The size of $L_0$ is fixed to be $500\times500$. The underlying subspace number $k$ varies from 1 to 50. We set the dimension of each subspace as $100/k$, and thus $L_0$ has a fixed rank of 100. (a) The first coherence parameter $\mu_1(L_0)$ vs subspace number. (b) The second coherence parameter $\mu_2(L_0)$ vs subspace number. (c) The performance of CONO vs subspace number. For the matrix completion experiments in (c), the percentage of missing entries is fixed to be 45\%. The recover error is computed as $\|\hat{L}_0-L_0\|_F/\|L_0\|_F$, where $\hat{L}_0$ is an estimate of $L_0$. The numbers shown in (c) are collected from 100 random trials.}\label{fig:icu2}
\end{center}
\end{figure}

Since the behaviors of data points could affect the row space $V_M$, the second coherence parameter $\mu_2$ may somehow depend on the geometric distributions of the data points. To confirm, we consider for exploration the mixture structure shown in Figure~\ref{fig:extra}(b), which is about the phenomenon that the data points in $L_0$ are sampled from $k$ number of subspaces, i.e., $L_0=[L_0^{(1)},\cdots,L_0^{(k)}]$, where $L_0^{(i)}$ is the matrix of data points from the $i$th subspace. While the rank of $L_0$ is fixed and the underlying subspace number $k$ goes large, Figure~\ref{fig:icu2}(b) shows that the second coherence parameter $\mu_2(L_0)$ keeps increasing. To see why the second coherence parameter increases with the cluster number underlying $L_0$, please refer to~\cite{liu:arxiv:2014}.

Among other things, the information revealed by Figure~\ref{fig:icu2}(a) is remarkable and useful: The first coherence parameter $\mu_1(L_0)$ is immune to the variation of the underlying subspace number. This is actually natural, because the behaviors of the data points can only affect the row space, while $\mu_1$ is defined on the column space\footnote{Notice that $\mu_1$ could be also large if the row vectors of $L_0$ own some structures beyond low-rankness. Such kind of data exist widely in text domain and we leave this as future work.}. Analogously, we have the following doctrines that depict the coherence parameters in general:
\begin{itemize}
\item[$\bullet$] The first coherence parameter $\mu_1(L_0)$ is always small, in despite of whether or not the geometric distribution of the data points is uniform.
\item[$\bullet$] The second coherence parameter $\mu_2(L_0)$ is small on the uniform data, but could be large on the non-uniform cases such as Figure~\ref{fig:extra}(b).
\end{itemize}

Now the answer to the question highlighted in the beginning of this subsection is clear. Namely, the analysis in~\cite{Candes:2009:math} illustrates that CONO prefers the cases where both $\mu_1$ and $\mu_2$ are small. Nevertheless, such an expectation might not be true as the second coherence parameter $\mu_2$ could be large on non-uniform data and, accordingly, the recovery performance of CONO might be unsatisfactory even when $L_0$ is strictly low-rank. To verify this assertion, we have executed lots of numerical experiments. As we can see from Figure~\ref{fig:icu2}(c), CONO degrades with the enlargement of the subspace number underlying $L_0$, i.e., CONO is dropping while $\mu_2(L_0)$ is increasing. This phenomenon additionally reflects that, besides of the low-rankness property, the extra structures (beyond low-rankness) also have a dramatic influence on the recovery of the latent matrix $L_0$.
\subsection{How to Choose the Dictionary in LRFD?}
As aforementioned, the first coherence parameter $\mu_1$ is invariant to the variations of the geometric distribution of data points. Hence, a promising direction for recovering non-uniform data might be to figure out in which conditions LRFD can \emph{avoid} the influences of the second coherence parameter $\mu_2$. We shall show that, when the dictionary $A$ itself is low-rank, LRFD is able to get around of $\mu_2$. Namely, the following two theorems are proved without using $\mu_2$ (The detailed procedures of proof can be found in Section~\ref{sec:proof}).

\begin{theorem}[Noiseless]\label{thm:noiseless}
Let $U_0\Sigma_0V_0^T$ be the SVD of $L_0$. Suppose that the dictionary matrix $A$ with SVD $U_A\Sigma_AV_A^T$ satisfies $\mathcal{P}_{U_A}(U_0)=U_0$ (i.e., $U_0$ is a subspace of $U_A$). For any $\delta>0$ and some numerical constant $c_a>0$, if
\begin{eqnarray*}
\rank{L_0}\leq\rank{A}\leq\frac{\delta^2n_2}{c_a\mu_1(A)\log{}n_1}\textrm{ and }\frac{|\Omega|}{mn}\geq\delta,
\end{eqnarray*}
then with probability at least $1-n_1^{-10}$, the optimal solution (denoted as $Z^*$) to problem \eqref{eq:mc} is unique and exact, in a sense that
$Z^*=A^+L_0$.
\end{theorem}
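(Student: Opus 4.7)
The plan is to follow the dual-certificate paradigm of Cand{\`e}s--Recht, adapted to the composite operator $\Phi(Z):=\mathcal{P}_\Omega(AZ)$ that appears in LRFD. First I would verify feasibility of the candidate $Z^\star:=A^+L_0$. The hypothesis $\mathcal{P}_{U_A}(U_0)=U_0$ forces $AA^+L_0=U_AU_A^TL_0=L_0$, and hence $\mathcal{P}_\Omega(AZ^\star)=\mathcal{P}_\Omega(L_0)$. Expanding $Z^\star=V_A\Sigma_A^{-1}U_A^TU_0\Sigma_0V_0^T$, one reads off that its column space sits inside $V_A$ (the row space of $A$) while its row space coincides with $V_0$; let $T$ denote the corresponding nuclear-norm tangent space at $Z^\star$.

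Next I would reduce optimality and uniqueness to the standard dual conditions. Because the objective $\|Z\|_*$ is convex and the constraint is affine in $Z$, the vector $Z^\star$ is the unique minimizer of \eqref{eq:mc} as soon as (i) $\Phi$ is injective on $T$, and (ii) there exists a matrix $W$ supported on $\Omega$ whose pull-back $A^TW$ is a valid nuclear-norm subgradient at $Z^\star$, i.e.\ $\mathcal{P}_T(A^TW)=U_{Z^\star}V_{Z^\star}^T$ and $\|\mathcal{P}_{T^\perp}(A^TW)\|<1$.

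Both conditions would be established through concentration of measure under the random sampling of $\Omega$. For (i), a matrix Bernstein argument shows that $\mathcal{P}_\Omega A$ is near-isometric on $T$ provided $\rank{A}\,\mu_1(A)\log n_1$ is small relative to $\delta n_2$, which is exactly the hypothesis of the theorem. For (ii), I would invoke Gross's golfing scheme: partition $\Omega$ into $O(\log n_1)$ independent sub-batches and construct $W$ iteratively as a sum of rescaled batch-sampling operators applied to the current residual on $T$, so that the residual inside $T$ contracts geometrically while its spectral norm on $T^\perp$ stays strictly below one. The sampling assumption $|\Omega|/(mn)\ge\delta$ guarantees that each batch is statistically substantial.

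The delicate part is that every one of these concentration steps must avoid any dependence on $\mu_2(L_0)$, even though the row space of $Z^\star$ is exactly the (possibly coherent) $V_0$. The mechanism that makes this possible is that $\Phi$ observes $AZ$ rather than $Z$ itself: the column-side coherence factor is governed by $U_A$, hence by $\mu_1(A)$, and the inclusion $U_0\subseteq U_A$ annihilates the cross terms that would naively drag $\mu_2(L_0)$ into the bounds, while row-side contributions involving $V_0$ are absorbed by the uniform sampling rate $\delta$ rather than by a coherence-weighted threshold. Carrying out this bookkeeping---verifying that the only coherence quantity surviving in each Bernstein estimate and each golfing residual is $\mu_1(A)$---is the main obstacle of the proof.
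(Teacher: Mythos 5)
Your outline correctly identifies feasibility of $Z^\star=A^+L_0$ and correctly locates the central difficulty (avoiding any dependence on $\mu_2(L_0)$), but the route you propose does not resolve that difficulty, and the paper's proof in fact takes a different and more elementary path precisely to sidestep it. The gap is in your step (ii): you propose to build the certificate by Gross's golfing scheme on the nuclear-norm tangent space $T$ at $Z^\star$, whose row space is $V_0$. Every concentration estimate in the golfing iteration --- the contraction of the residual inside $T$ and the spectral bound on $\mathcal{P}_{T^\perp}$ of each batch term --- is driven by quantities of the form $\max_{ij}\|\mathcal{P}_T(e_ie_j^T)\|_F^2$, which decompose as a column-side term (controlled by $\mu_1(A)$, as you say) \emph{plus} a row-side term of order $\mu_2\,r/n$. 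Your claim that the row-side contribution is ``absorbed by the uniform sampling rate $\delta$'' is the entire content of the theorem and is not established by anything in your argument; indeed in the paper's own motivating example ($L_0$ with a single nonzero column, so $\mu_2(L_0)=n$) the standard golfing bounds would demand on the order of $n\cdot(m+n)\log^2 n_1$ samples, i.e.\ more than $mn$, so the scheme as described simply fails there. Flagging this as ``the main obstacle'' without a mechanism to overcome it leaves the proof incomplete at its only nontrivial point.

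What the paper does instead is abandon the tangent space entirely. Under the constant-fraction sampling hypothesis $|\Omega|/(mn)\ge\delta$, it proves (Lemma 1) that $\|\mathcal{P}_{U_A}\mathcal{P}_{\Omega^\perp}\mathcal{P}_{U_A}\|\le 1-\rho_0+\delta<1$, using only $\mu_1(A)$, and hence (Lemma 2) that $\mathcal{P}_{U_A}\mathcal{P}_{\Omega}\mathcal{P}_{U_A}$ is invertible via a Neumann series on the \emph{whole} subspace of matrices whose columns lie in $U_A$ --- a much larger space than $T$. Because $(A^T)^+UV^T$ lies in that subspace (here $U\Sigma V^T$ is the SVD of $A^+L_0$ and $U\subset V_A$), one can set $Y=\mathcal{P}_{\Omega}\mathcal{P}_{U_A}\bigl(\mathcal{I}+\sum_{i\ge1}(\mathcal{P}_{U_A}\mathcal{P}_{\Omega^\perp}\mathcal{P}_{U_A})^i\bigr)\bigl((A^T)^+UV^T\bigr)$ and obtain the \emph{exact} identity $A^T\mathcal{P}_{\Omega}(Y)=UV^T\in\partial\|A^+L_0\|_*$. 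There is no residual component on $T^\perp$ to control, hence no golfing, and the row space $V_0$ (and thus $\mu_2$) never enters. Uniqueness then follows from a separate perturbation argument: any feasible $Z^\star+\Delta$ has $A\Delta\in\Omega^c\cap U_A=\{0\}$ (this intersection is trivial by the same operator-norm bound), forcing $\Delta$ into $U^\perp$, and a block decomposition of the nuclear norm shows the objective strictly increases unless $\Delta=0$. The price of this cleaner argument is the strong sampling requirement $|\Omega|\ge\delta mn$ (a constant fraction of all entries) rather than the near-information-theoretic rates of Cand\`es--Recht; that trade-off is exactly what lets the theorem dispense with $\mu_2$. If you want to salvage your approach, you would need to replace the golfing construction with this column-space-only inversion, at which point you have reproduced the paper's proof.
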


Figure~\ref{fig:dict} further confirms that there exist some kind of dictionaries using which LRFD is immune to the second coherence parameter $\mu_2$. The condition $\mathcal{P}_{U_A}(U_0)=U_0$ (i.e., $U_0$ is a subspace of $U_A$) is indispensable if we ask for the exactness of recovery, as $U_0\subset{}U_A$ is implied by the equality $AZ^*=L_0$. So what is suggested by above theorem is that the dictionary matrix $A$ should be made low-rank. This provides an elementary criterion for learning the dictionary matrix of LRFD.

\begin{figure}[h!]
\begin{center}
\includegraphics[width=0.95\textwidth]{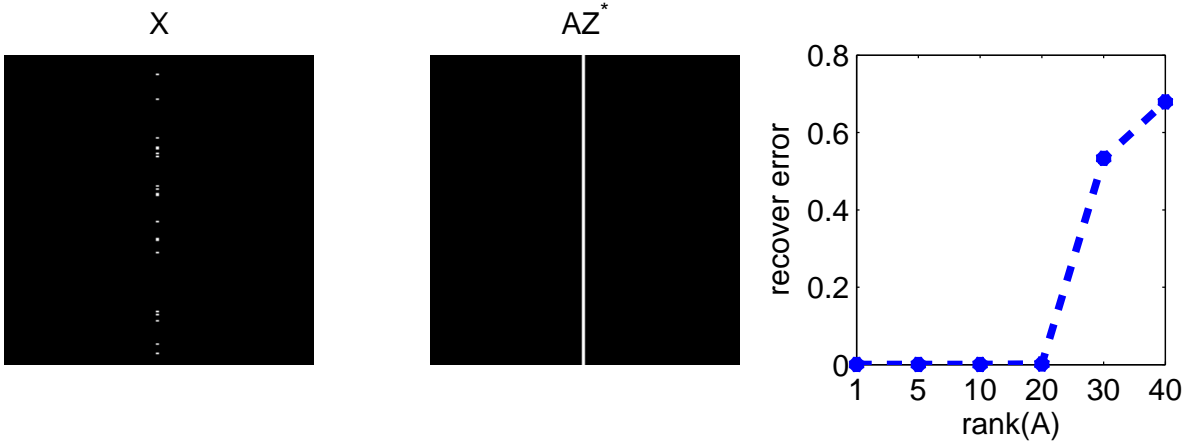}
\caption{Illustrating that LRFD can avoid $\mu_2$. In these experiments, $L_0$ is a $200\times200$ rank-1 matrix with one column be $\mathbf{1}$ (i.e, a vector of all ones) and everything else being zero. So $\mu_1(L_0)=1$ and $\mu_2(L_0)=n=200$. The dictionary in LRFD is set as $A=[\mathbf{1},W]$, where $W$ is a $200\times{}p$ random Gaussian matrix ($p$ is varying). The columns of $A$ are further normalized to have a unit length. As long as $\rank{A}\leq20$, the latent matrix $L_0$ (with high coherence) can be exactly recovered from an incomplete observation matrix $X$, 90\% entries of which are missing.}\label{fig:dict}
\end{center}
\end{figure}

The program \eqref{eq:cono} is designed for the case where the observed entries are noiseless. In reality this is often not true and the observations themselves could be actually contaminated. Cand{\`e}s and Plan have proven in~\cite{CandesPIEEE} that, even when the few observed entries are contaminated by a small amount of noise, matrix completion can be accurately performed by the following modified version:
\begin{eqnarray}\label{eq:cono:noisy}
\min_{L}\|L\|_*, &\textrm{s.t.}&\|\mathcal{P}_{\Omega}(X-L)\|_F\leq{}\epsilon,
\end{eqnarray}
where $\epsilon>0$ is a parameter that measures the noise level of the observations.

Similarly, LRFD \eqref{eq:mc} could be also modified to handle the problem of noisy matrix completion:
\begin{eqnarray}\label{eq:mc:noisy}
\min_{Z}\|Z\|_*, &\textrm{s.t.}&\|\mathcal{P}_{\Omega}(X-AZ)\|_F\leq{}\epsilon.
\end{eqnarray}
In the presence of dense noise, it is unrealistic to achieve exact recovery. Yet we have the following theorem to guarantee the recovery accuracy of \eqref{eq:mc:noisy}:
\begin{theorem}[Noisy]\label{thm:noisy}
Suppose that the dictionary matrix $A$ with SVD $U_A\Sigma_AV_A^T$ satisfies $\mathcal{P}_{U_A}(U_0)=U_0$ (i.e., $U_0\subset{}U_A$), and $\|\mathcal{P}_{\Omega}(X-L_0)\|_F\leq\epsilon$. For any $\delta>0$ and some numerical constant $c_a>0$, if
\begin{eqnarray*}
\rank{L_0}\leq\rank{A}\leq\frac{\delta^2n_2}{c_a\mu_1(A)\log{}n_1} \textrm{ and } \frac{|\Omega|}{mn}\geq2\delta,
\end{eqnarray*}
then with probability at least $1-n_1^{-10}$, the optimal solution (denoted as $Z^*$) to problem \eqref{eq:mc:noisy} gives a near recovery to $L_0$, in a sense that $\|AZ^*-L_0\|_F\leq2\epsilon/\delta$.
\end{theorem}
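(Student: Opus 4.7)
The plan is to bootstrap Theorem~2 from the probabilistic machinery that already proves Theorem~1, via the standard Candès--Plan robustness upgrade. No new probabilistic event beyond those needed for the noiseless case enters the argument; what is new is the accounting that turns ``exactness'' into a stability bound.

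First I produce an ideal feasible point. The assumption $\mathcal{P}_{U_A}(U_0)=U_0$ yields $AA^+L_0=L_0$, hence $Z_0:=A^+L_0$ satisfies $AZ_0=L_0$ and $\|\mathcal{P}_\Omega(X-AZ_0)\|_F=\|\mathcal{P}_\Omega(X-L_0)\|_F\leq\epsilon$. So $Z_0$ is feasible for \eqref{eq:mc:noisy} and in particular $\|Z^*\|_*\leq\|Z_0\|_*$. Writing $H:=AZ^*-L_0$, a single application of the triangle inequality against the two feasibility constraints gives the observed-side bound
\begin{equation*}
\|\mathcal{P}_\Omega(H)\|_F \leq \|\mathcal{P}_\Omega(X-AZ^*)\|_F + \|\mathcal{P}_\Omega(X-L_0)\|_F \leq 2\epsilon.
\end{equation*}

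The substantive step is to propagate this into a bound on $\|\mathcal{P}_{\Omega^\bot}(H)\|_F$, and for this I would reuse two by-products of the Theorem~1 argument: (i) a dual certificate witnessing optimality of $Z_0$, which forces any feasible perturbation $Z^*-Z_0$ to have its $T^\bot$-component controlled by its $T$-component (with $T$ the tangent space of the nuclear norm at $Z_0$); and (ii) a near-isometry of the rescaled sampling operator on $T$, namely $\|\mathcal{P}_T \mathcal{P}_\Omega \mathcal{P}_T - p\mathcal{P}_T\|\leq p/2$ where $p=|\Omega|/(mn)$. Both events hold simultaneously with probability at least $1-n_1^{-10}$ under the stated rank condition on $A$, and crucially neither one uses $\mu_2$ --- only $\mu_1(A)$ --- which is the whole point of forcing the dictionary to be low-rank. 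Combining (i), (ii), and the observed-side bound in the standard way (decompose $H=\mathcal{P}_T(H)+\mathcal{P}_{T^\bot}(H)$, bound $\|\mathcal{P}_T(H)\|_F$ via (ii) after subtracting the $\mathcal{P}_{T^\bot}$ contribution allowed by (i)) delivers a Frobenius bound on $H$ in terms of $\epsilon$ and $p$.

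The main obstacle is arithmetic bookkeeping rather than any new idea: landing on exactly the constant $2\epsilon/\delta$ (rather than some larger multiple of $\epsilon$) requires splitting the sampling budget between securing the near-isometry in (ii) and absorbing the residual slack into the final bound. This is precisely why the sampling hypothesis is strengthened from $|\Omega|/(mn)\geq\delta$ in Theorem~1 to $|\Omega|/(mn)\geq 2\delta$ here --- the extra factor of $2$ pays for the $p/2$ loss in (ii) and leaves behind the clean constant $2/\delta$ in front of $\epsilon$. Tracking these constants carefully through the chain of inequalities, while verifying that $\mu_2$ never re-enters, is the one delicate moment; once that is done, Theorem~2 genuinely piggybacks on Theorem~1 and no fresh probabilistic input is required.
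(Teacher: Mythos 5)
Your first step (feasibility of $Z_0=A^+L_0$, hence $\|\mathcal{P}_{\Omega}(AZ^*-L_0)\|_F\le 2\epsilon$ by the triangle inequality) matches the paper exactly. But the second half of your plan both overcomplicates the argument and, more seriously, undermines the theorem's central claim. The near-isometry you invoke in (ii) lives on the tangent space $T$ of the nuclear norm at $Z_0$ (or at $L_0$), and $\mathcal{P}_T$ contains a row-space component $(\cdot)VV^T$: the bound that feeds the Rudelson-type concentration there is $\|\mathcal{P}_T(e_ie_j^T)\|_F^2\le \mu_1 r/m+\mu_2 r/n$, so $\|\mathcal{P}_T\mathcal{P}_{\Omega}\mathcal{P}_T-p\,\mathcal{P}_T\|\le p/2$ cannot be certified from $\mu_1(A)$ alone. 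Your assertion that neither event ``uses $\mu_2$'' is precisely where the proposal breaks: on the non-uniform data the theorem is designed for, $\mu_2$ is large and the tangent-space isometry fails at the stated sampling rate. Likewise, the dual certificate in (i) adequate for the noisy problem is the strengthened Cand\`es--Plan construction, which this paper never builds.

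The paper's actual proof needs neither the optimality of $Z^*$ nor any tangent space. It observes that the error $H=AZ^*-L_0$ automatically has all of its columns in $U_A$ (because $AZ^*$ does by construction and $L_0$ does by the hypothesis $U_0\subset U_A$), i.e., $\mathcal{P}_{U_A}(H)=H$. Lemmas~\ref{lem:pupo} and~\ref{lem:inverse} show that $\mathcal{P}_{U_A}\mathcal{P}_{\Omega}\mathcal{P}_{U_A}$ is invertible on this column-space-constrained set with inverse of operator norm at most $1/(\rho_0-\delta)$, and this uses only $\mu_1(A)$ because $\mathcal{P}_{U_A}(e_ie_j^T)=U_AU_A^Te_ie_j^T$ never touches a row space. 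Then $\|H\|_F\le\|\mathcal{P}_{\Omega}(H)\|_F/(\rho_0-\delta)\le 2\epsilon/(\rho_0-\delta)\le 2\epsilon/\delta$; the last inequality is the real reason the sampling hypothesis is strengthened to $|\Omega|/(mn)\ge 2\delta$ (so that $\rho_0-\delta\ge\delta$), not the ``pay for the $p/2$ loss'' bookkeeping you describe. Note the resulting bound holds for \emph{every} feasible $Z$, which shows that the mechanism at work is entirely the low-rankness of $A$ restricting the error to a column space on which sampling is a near-isometry, rather than any property of nuclear-norm minimization.
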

\subsection{An Algorithm for Matrix Completion}
The theorems introduced above provide a general direction for configuring the dictionary matrix in LRFD, implying several potential procedures. For example, one may drive some kind of optimization framework to jointly compute the variables $A$ and $Z$. In this paper, we would like to introduce a simple yet solid algorithm: We firstly obtain an estimate of $L_0$ by using CONO and then utilize the estimate to construct the dictionary matrix $A$ in LRFD. For the stability of computation, the CONO program \eqref{eq:cono:noisy} is implemented by solving its equivalent version:
\begin{eqnarray}\label{eq:cono:noisy1}
\min_{L}\|L\|_* + \frac{\lambda}{2}\|\mathcal{P}_{\Omega}(X-L)\|_F^2,
\end{eqnarray}
where $\lambda>0$ is taken as a parameter. Similarly, our LRFD program \eqref{eq:mc:noisy} is implemented by solving
\begin{eqnarray}\label{eq:mc:noisy1}
\min_{Z}\|Z\|_* + \frac{\lambda}{2}\|\mathcal{P}_{\Omega}(X-AZ)\|_F^2.
\end{eqnarray}
Provided that the observed entries are contaminated by small Gaussian noise and the dictionary $A$ is column-wisely unit-normed (i.e., $Ae_i=1,\forall{}i$), the regularization parameter $\lambda$ does not require extensive adjustments. Usually, $\lambda=100$ is a moderately good choice.
\begin{algorithm}[h!]
\caption{Matrix Completion}\label{alg:mc}
\begin{algorithmic}
\STATE \textbf{input:} An observed data matrix $X\in{}\mathbb{R}^{m\times{}n}$, and a support set $\Omega$ that stores the locations of the observed entries.
\STATE\textbf{adjustable parameter:} $\lambda$.
\STATE \textbf{1.} Solve for $\hat{L}_0$ by optimizing \eqref{eq:cono:noisy1} with $\lambda=100$.
\STATE \textbf{2.} Estimate the rank of $\hat{L}_0$ by $$\hat{r}_0=\#\{i: \sigma_i>10^{-3}\sigma_1\},$$
where $\sigma_1\geq\sigma_2\cdots$ are the singular values of $\hat{L}_0$.
\STATE \textbf{3.} Form $\tilde{L}_0$ by using the rank-$\hat{r}_0$ approximation of $\hat{L}_0$. That is,
\begin{eqnarray*}
\tilde{L}_0=\arg\min_{L}\|L-\hat{L}_0\|_F^2, \textrm{ s.t. } \rank{L}\leq\hat{r}_0,
\end{eqnarray*}
which is solved by SVD.
\STATE \textbf{4.} Construct a dictionary $\hat{A}$ from $\tilde{L}_0$ by normalizing the column vectors of $\tilde{L}_0$:
\begin{eqnarray*}
[\hat{A}]_{:,i}=\frac{[\tilde{L}_0]_{:,i}}{\|[\tilde{L}_0]_{:,i}\|_2}, i=1,\cdots,n,
\end{eqnarray*}
where $[\cdot]_{:,i}$ denotes the $i$th column of a matrix.
\STATE\textbf{5.} Solve for $Z^*$ by optimizing problem \eqref{eq:mc:noisy1} with $A=\hat{A}$ and $\lambda=100$.
\STATE \textbf{output:} $\hat{A}Z^*$.
\end{algorithmic}
\end{algorithm}

Algorithm \ref{alg:mc} summarizes the whole procedure of our algorithm for matrix completion. Note that the post-processing steps (Step 2 and Step 3) that mildly process the solution of CONO is to further encourage low-rank and well-conditioned dictionary, which is a sufficient condition for LRFD to succeed. To facilitate the choice of the parameter $\lambda$, Step 4 further normalizes the column vectors and ensure that the produced dictionary is column-wisely unit-normed.

While simple, our Algorithm \ref{alg:mc} is guaranteed in theory not to regress backward. That is, whenever CONN has already been successful in recovering $L_0$, the claims made in Theorem \ref{thm:noiseless} and Theorem \ref{thm:noisy} imply that the recovery produced by Algorithm \ref{alg:mc} is successful too.
\section{Experiments}
\subsection{Results on Randomly Generated Data}
We first verify the effectiveness of our Algorithm \ref{alg:mc} on randomly generated matrices. We generate a collection of $200\times1000$ data matrices according to the model of $X=\mathcal{P}_{\Omega}(L_0)$: $\Omega$ is an index set chosen at random, and $L_0$ is created by sampling 200 data points from each of 10 randomly generated subspaces. The rank of each subspace varies from 1 to 20 with step size 1, and thus the rank of $L_0$ varies from 10 to 200 with step size 10. The observation fraction $|\Omega|/(mn)$ varies from 32.5\% to 80\% with step size 2.5\%. For each combination of rank and observation fraction, we run 10 trials, resulting in a total number of 4000 ($20\times20\times10$) trials.

\begin{figure}[h!]
\begin{center}
\includegraphics[width=0.95\textwidth]{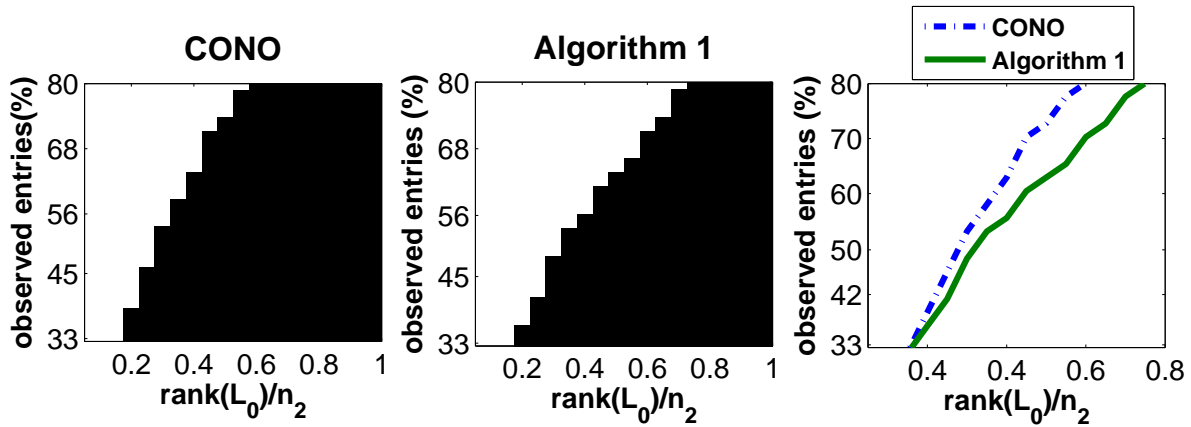}
\caption{\textbf{Comparing CONN with Algorithm \ref{alg:mc} on randomly generated matrices.} A curve shown in the third column is the boundary for an algorithm to be successful. In other words, the recovery is successful for any pair $(\rank{L_0}/n_2,|\Omega|/(mn))$ above the curve. Here, the success of recovery is in a sense that $\|\hat{L}_0-L_0\|_F<0.05\|L_0\|_F$, where $\hat{L}_0$ denotes an estimate of $L_0$. }\label{fig:recover}
\end{center}
\end{figure}

Figure~\ref{fig:recover} compares our Algorithm \ref{alg:mc} to CONN, both using $\lambda=10^6$. It can be seen that the learnt dictionary matrix works distinctly better than the identity matrix adopted by CONN. Namely, the area of the success region (i.e., white region) of our algorithm is 24.6\% larger than that of CONN. This verifies the significance of dictionary learning and the effectiveness of our Algorithm \ref{alg:mc}.
\subsection{Results on Motion Data}\label{sec:exp:motion}
We now experiment by using real motion sequences with incomplete trajectories. We use 11 additional sequences attached to the Hopkins155~\cite{hopkin155} database. Each sequence is a sole dataset (i.e., data matrix) and so there are in total 11 datasets of different properties, including the number of subspaces, the data dimension and the number of data samples. Particularly, in those sequences about 10\% of the entries in the data matrix of trajectories are unobserved (i.e., missed) due to vision occlusion, as illustrated in Figure~\ref{fig:motion}.

Notice that the ground truth matrix $L_0$ is unknown. To evaluate matrix completion algorithms in a quantitative way, we use the \emph{clustering error rates} produced by existing subspace clustering methods as the metrics to evaluate the quality of matrix completion. Namely, we firstly perform subspace clustering on both the incomplete trajectory matrices and the completed versions, and then compute the clustering error rates of the existing subspace clustering methods. We consider three state-of-the-art subspace clustering methods, including Shape Interaction Matrix (SIM)~\cite{ijcv_1998_factor}, Low-Rank Representation with $dictionary=X$ (LRRx)~\cite{icml_2010_lrr} and Sparse Subspace Clustering (SSC)~\cite{cvpr_2009_ssc}. As none of these methods owns a mechanism for handling the missing entries, we implement a simple strategy for them: Each missed entry is nominally assigned a value of zero.
\begin{figure}[h!]
\begin{center}
\includegraphics[width=0.8\textwidth]{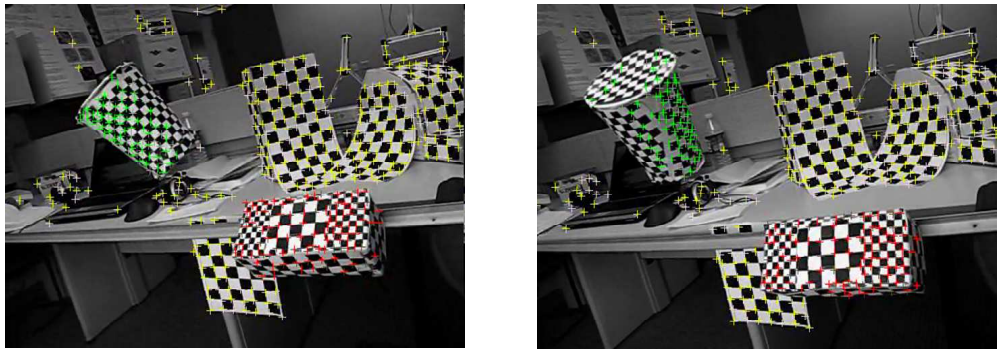}
\caption{Example image frames from the motion sequences used in our experiments. Due to the rotation of the objects, some measurements in the data matrix of trajectories are missing.}\label{fig:motion}
\end{center}
\end{figure}

Table \ref{tb:motion} shows the error rates of various algorithms. Without the preprocessing of matrix completion, all the subspace clustering methods fail to accurately categorize the trajectories of motion objects, producing error rates higher than 19\%. In contrast, without the presence of missing entries, the lowest error rate from SIM, LRRx and SSC on Hopkins155 is as low as 1\%~\cite{tpami_2013_lrr}. This illustrates that it is important for motion segmentation to restore the missing entries possibly existing in the data matrix of trajectories. By using CONO (with $\lambda=100$) to restore the missing entries, the clustering performances of all considered subspace clustering methods are improved dramatically. For example, the error rate of SSC is reduced from 31.75\% to 3.24\%. By seeking an advanced solution for matrix completion using Algorithm~\ref{alg:mc} (with $\lambda=100$), the error rates can be reduced again. For example, the error rate of LRRx is reduced from $7\%$ to $5\%$, which is a 28\% improvement. These results verify the effectiveness of our dictionary learning strategy in a realistic environment.
\begin{table}[h!]
\caption{Clustering error rates (\%) on the 11 motion sequences with incomplete trajectories.}\label{tb:motion}
\begin{center}
\begin{tabular}{|l|ccc|}\hline
        &SIM            &CONN+ SIM                &Algorithm \ref{alg:mc}+ SIM\\\hline
mean	&19.70		    &12.04				      &10.76		\\
max		&40.04		    &44.76				      &38.97		\\
min		&3.27			&0.58					  &0.45			\\
std		&11.47		    &15.38				      &11.79		\\
time (sec.)&0.05	    &8.42			          &12.34	    \\\hline\hline
        &LRRx           &CONN+LRRx                &Algorithm \ref{alg:mc}+LRRx  \\\hline
mean	&19.85		    &7.06					  &4.94			\\
max		&36.83		    &49.68				      &22.22		\\
min		&0.90			&0.33					  &0.33			\\
std		&14.66		    &14.38				      &6.54			\\
time (sec.)&2.92		&9.42				      &13.03		\\\hline\hline
        &SSC            &CONN+SSC                 &Algorithm \ref{alg:mc}+SSC  \\\hline
mean	&31.75		    &3.24					  &2.98			\\
max		&47.19		    &12.21				      &10.28		\\
min		&19.25		    &0					      &0			\\
std		&10.24		    &4.24					  &3.87			\\
time (sec.)&2.33	    &10.47				      &14.32		\\\hline
\end{tabular}
\end{center}
\end{table}
\section{Conclusion and Future Work} This paper pointed out that there could exist rich structures inside a low-dimensional subspace, so called as \emph{extra structures beyond low-rankness}. We showed that such extra structures cannot be ignored and have dramatic influences on the success of restoring a low-rank matrix form the incomplete versions. We further proposed a novel model termed LRFD (Low-Rank Factor Decomposition) which handle the extra structures by imposing an additional constraint that the data points are represented by the linear combinations of the bases of a dictionary. Provided that the dictionary is configured properly, LRFD could generally work well on non-uniform data without knowing an precise model of the geometric distributions of the data points. We mathematically proved some theorems which suggest that the dictionary matrix in LRFD should be made low-rank. Subsequently, we established a brief algorithm for approximating such dictionaries in unsupervised environments. Extensive simulations and experiments verify the effectiveness of our algorithm.

The goal of this paper is to analyze in general the problem of modeling extra structures beyond low-rankness and provide some basic principles for resolving the problem. Our proposed algorithm does not aim at completely solving the problem, but rather target on a small yet solid step for advancing matrix completion. It is entirely possible to develop more effective algorithms for learning the dictionary matrix in LRFD and we leave this as future work.

\section*{Acknowledgement}

Guangcan Liu is a Postdoctoral Researcher supported by NSF-DMS0808864,  NSF-EAGER1249316, AFOSR-FA9550-13-1-0137, and ONR-N00014-13-1-0764. Ping Li is also partially supported by  NSF-III1360971 and NSF-BIGDATA1419210.

\newpage\clearpage

\section{Mathematical Proofs}\label{sec:proof}
\subsection{Proof of Theorem \ref{thm:noiseless}}
The same as in CONO, we assume that the locations of the observed entries are selected uniformly at random. In more details, we work with the Bernoulli model $\Omega=\{(i,j):\delta_{ij}=1\}$, where $\delta_{ij}$'s are i.i.d variables taking value one with probability $\rho_0=|\Omega|/(mn)$ and zero with probability $(1-\rho_0)$, so that the expected cardinality of $\Omega$ is $\rho_0mn$.

We first establish the following lemma that plays an important role in the proof.
\begin{lemma}\label{lem:pupo}
Suppose $\Omega\sim{}Ber(\rho_0)$. Then for any $\delta>0$, we have
\begin{eqnarray*}
\|\mathcal{P}_{U_A}\mathcal{P}_{\Omega^\bot}\mathcal{P}_{U_A}\|\leq{}1-\rho_0+\delta
\end{eqnarray*}
obeys with probability at least $1-n_1^{-10}$, provided that
\begin{eqnarray*}
\rank{A}\leq\frac{\delta^2n_2}{c_a\mu_1(A)\log{}n_1},
\end{eqnarray*}
where $c_a$ is a numerical constant.
\end{lemma}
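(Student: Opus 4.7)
The plan is to mimic the classical argument of Cand\`es--Recht (their Theorem~4.1) controlling $\|\mathcal{P}_T\mathcal{P}_{\Omega^\bot}\mathcal{P}_T-(1-\rho_0)\mathcal{P}_T\|$, but replacing the full tangent-space projector $\mathcal{P}_T$ by the pure column-space projector $\mathcal{P}_{U_A}$. This substitution is the entire reason $\mu_2(A)$ never appears in the bound.

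First, using $\mathcal{P}_{\Omega^\bot}=\mathcal{I}-\mathcal{P}_\Omega$ together with the idempotence of $\mathcal{P}_{U_A}$, I would split
\[
\mathcal{P}_{U_A}\mathcal{P}_{\Omega^\bot}\mathcal{P}_{U_A}\;=\;(1-\rho_0)\,\mathcal{P}_{U_A}\;+\;\bigl(\rho_0\,\mathcal{P}_{U_A}-\mathcal{P}_{U_A}\mathcal{P}_\Omega\mathcal{P}_{U_A}\bigr).
\]
The deterministic piece has operator norm exactly $1-\rho_0$, so by the triangle inequality it suffices to control the random deviation on the right by $\delta$ with the claimed probability.

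Second, I would expand the deviation as an independent sum of rank-one self-adjoint random operators acting on matrices. Writing $\mathcal{P}_\Omega(M)=\sum_{ij}\delta_{ij}\langle M,e_ie_j^T\rangle e_ie_j^T$ and using self-adjointness of $\mathcal{P}_{U_A}$,
\[
\rho_0\,\mathcal{P}_{U_A}-\mathcal{P}_{U_A}\mathcal{P}_\Omega\mathcal{P}_{U_A}\;=\;-\sum_{i,j}(\delta_{ij}-\rho_0)\,\mathcal{S}_{ij},\qquad \mathcal{S}_{ij}(M):=\bigl\langle M,\,U_AU_A^Te_ie_j^T\bigr\rangle\,U_AU_A^Te_ie_j^T.
\]
Each $\mathcal{S}_{ij}$ is positive rank-one with operator norm equal to $\|U_AU_A^Te_ie_j^T\|_F^2=\|U_A^Te_i\|_2^2$, which by \eqref{eq:u1} is at most $\mu_1(A)\rank{A}/m\leq\mu_1(A)\rank{A}/n_2$. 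Crucially, this bound is independent of $j$; no row-space quantity tied to $V_A$ ever enters. A direct computation using $\mathcal{S}_{ij}^2=\|U_A^Te_i\|_2^2\,\mathcal{S}_{ij}$ and $\sum_{ij}\mathcal{S}_{ij}=\mathcal{P}_{U_A}$ yields a variance proxy $\bigl\|\sum_{ij}\mathbb{E}(\delta_{ij}-\rho_0)^2\mathcal{S}_{ij}^2\bigr\|\leq\mu_1(A)\rank{A}/n_2$ as well. A non-commutative (operator) Bernstein inequality, in the spirit of Recht's ``simpler approach'' paper, then bounds the failure probability by $n_1^{-10}$ whenever $\mu_1(A)\rank{A}\log n_1/n_2\leq\delta^2/c_a$ for a suitable absolute constant $c_a$, which is precisely the hypothesis.

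The main obstacle will be the bookkeeping inside the operator Bernstein step, in particular verifying that the dimension factor in the non-commutative tail contributes only an extra $\log n_1$, so that the threshold matches $\delta^2 n_2/(c_a\mu_1(A)\log n_1)$, and that the variance proxy truly scales linearly in $\rho_0$ rather than quadratically. The key structural observation---and the entire reason this lemma is useful for LRFD---is that $\|\mathcal{P}_{U_A}(e_ie_j^T)\|_F^2=\|U_A^Te_i\|_2^2$ depends only on $i$, cleanly replacing the ``$\mu_1\mu_2$'' pattern of the Cand\`es--Recht analysis with a bound involving $\mu_1$ alone.
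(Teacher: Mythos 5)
Your proposal is correct and follows essentially the same route as the paper: the identical decomposition $\mathcal{P}_{U_A}\mathcal{P}_{\Omega^\bot}\mathcal{P}_{U_A}=(1-\rho_0)\mathcal{P}_{U_A}+(\rho_0\mathcal{P}_{U_A}-\mathcal{P}_{U_A}\mathcal{P}_{\Omega}\mathcal{P}_{U_A})$, the same expansion into independent rank-one operators, and the same key bound $\|\mathcal{P}_{U_A}(e_ie_j^T)\|_F^2\leq\mu_1(A)\,\rank{A}/m$ that makes $\mu_2$ disappear. The only (immaterial) difference is that you invoke an operator Bernstein inequality where the paper cites Rudelson's result and the Cand\`es--Recht proof procedure to obtain the same $\sqrt{\mu_1(A)\,\rank{A}\log n_1/n_2}$-type deviation bound.
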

\begin{proof}
For any matrix $M$, we have
\begin{eqnarray*}
\mathcal{P}_{U_A}(M)=\sum_{i,j}\langle{}\mathcal{P}_{U_A}(M),e_ie_j^T\rangle{}e_ie_j^T,
\end{eqnarray*}
and so
\begin{eqnarray*}
\mathcal{P}_{\Omega}\mathcal{P}_{U_A}(M)=\sum_{i,j}\delta_{ij}\langle{}\mathcal{P}_{U_A}(M),e_ie_j^T\rangle{}e_ie_j^T,
\end{eqnarray*}
which gives
\begin{eqnarray*}
\mathcal{P}_{U_A}\mathcal{P}_{\Omega}\mathcal{P}_{U_A}(M)&=&\sum_{i,j}\delta_{ij}\langle{}\mathcal{P}_{U_A}(M),e_ie_j^T\rangle{}\mathcal{P}_{U_A}(e_ie_j^T)\\
&=&\sum_{i,j}\delta_{ij}\langle{}M,\mathcal{P}_{U_A}(e_ie_j^T)\rangle{}\mathcal{P}_{U_A}(e_ie_j^T).
\end{eqnarray*}
Note that the Frobenius norm of a matrix is equivalent to the vector $\ell_2$ norm, while considering the matrix as a long vector. In that sense, we have
\begin{eqnarray*}
\mathcal{P}_{U_A}\mathcal{P}_{\Omega}\mathcal{P}_{U_A}=\sum_{i,j}\delta_{ij}\mathcal{P}_{U_A}(e_ie_j^T)\otimes\mathcal{P}_{U_A}(e_ie_j^T),
\end{eqnarray*}
where $\otimes$ denotes the Kronecker product.

The definition of $\mu_1(A)$ gives
\begin{eqnarray*}
\|\mathcal{P}_{U_A}(e_ie_j^T)\|_F^2\leq{}\frac{\mu_1(A)r_A}{m},
\end{eqnarray*}
where $r_A\equiv\rank{A}$. Then by using the results in \cite{Rudelson99randomvectors} and following the proof procedure in \cite{Candes:2009:math}, it could be concluded that the inequality
\begin{eqnarray*}
&&\|\rho_0\mathcal{P}_{U_A}-\mathcal{P}_{U_A}\mathcal{P}_{\Omega}\mathcal{P}_{U_A}\|\leq\rho_0(\phi_1\sqrt{\frac{\mu_1(A)r_A\log{}n_1}{n_2}}\\
&&+\phi_2\sqrt{\frac{\mu_1(A)\beta{}r_A\log{}n_1}{n_2}})\leq\phi_1\sqrt{\frac{\mu_1(A)r_A\log{}n_1}{n_2}}\\
&&+\phi_2\sqrt{\frac{\mu_1(A)\beta{}r_A\log{}n_1}{n_2}}
\end{eqnarray*}
obeys with probability at least $1-n_1^{-\beta}$ for some numerical constants $\phi_1$ and $\phi_2$. For any $\delta>0$, setting $\beta=10$ and $c_a=(\phi_1+\sqrt{10}\phi_2)^2$ gives that
\begin{eqnarray*}
\|\rho_0\mathcal{P}_{U_A}-\mathcal{P}_{U_A}\mathcal{P}_{\Omega}\mathcal{P}_{U_A}\|\leq\delta
\end{eqnarray*}
holds with probability at least $1-n_1^{-10}$, provided that $r_A\leq{}\delta^2n_2/(c_a\mu_1(L_0)\log{}n_1)$.

By the equality that $\mathcal{P}_{U_A}\mathcal{P}_{\Omega^\bot}\mathcal{P}_{U_A}=(1-\rho_0)\mathcal{P}_{U_A}+(\rho_0\mathcal{P}_{U_A}-\mathcal{P}_{U_A}\mathcal{P}_{\Omega}\mathcal{P}_{U_A})$ and the triangle inequality,
\begin{eqnarray*}
&&\|\mathcal{P}_{U_A}\mathcal{P}_{\Omega^\bot}\mathcal{P}_{U_A}\|\leq\|(1-\rho_0)\mathcal{P}_{U_A}\|\\
&&+\|\rho_0\mathcal{P}_{U_A}-\mathcal{P}_{U_A}\mathcal{P}_{\Omega}\mathcal{P}_{U_A}\|\leq1-\rho_0+\delta.
\end{eqnarray*}
\end{proof}

Based on the above lemma, we easily prove the following lemma which states that $(P_{U_A}P_{\Omega}P_{U_A})^{-1}$ is well defined and has a small operator norm.
\begin{lemma}\label{lem:inverse}
Let $\|P_{U_A}P_{\Omega^\bot}P_{U_A}\|=\psi$. If $\psi<1$, then the operator $\mathcal{P}_{U_A}\mathcal{P}_{\Omega}\mathcal{P}_{U_A}$ is an injection from $\mathcal{P}_{U_A}$ to $\mathcal{P}_{U_A}$, and its inverse operator is given by
\begin{eqnarray*}
\mathcal{I}+\sum_{i=1}^{\infty}(\mathcal{P}_{U_A}\mathcal{P}_{\Omega^\bot}\mathcal{P}_{U_A})^i.
\end{eqnarray*}
\end{lemma}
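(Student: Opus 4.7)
The plan is to interpret this as a standard Neumann series argument on the Hilbert space of $m\times n$ matrices equipped with the Frobenius inner product, after restricting attention to the range of $\mathcal{P}_{U_A}$ (matrices whose column space lies in $U_A$). Using $\mathcal{P}_{\Omega}+\mathcal{P}_{\Omega^\bot}=\mathcal{I}$ together with the idempotence $\mathcal{P}_{U_A}^2=\mathcal{P}_{U_A}$, I would first rewrite
\begin{eqnarray*}
\mathcal{P}_{U_A}\mathcal{P}_{\Omega}\mathcal{P}_{U_A} \;=\; \mathcal{P}_{U_A}-\mathcal{P}_{U_A}\mathcal{P}_{\Omega^\bot}\mathcal{P}_{U_A}.
\end{eqnarray*}
On the range of $\mathcal{P}_{U_A}$, the leading term $\mathcal{P}_{U_A}$ acts as the identity, so the question reduces to inverting $\mathcal{I}-\mathcal{P}_{U_A}\mathcal{P}_{\Omega^\bot}\mathcal{P}_{U_A}$ on that subspace.

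Next, because $\|\mathcal{P}_{U_A}\mathcal{P}_{\Omega^\bot}\mathcal{P}_{U_A}\|=\psi<1$, the Neumann series $\sum_{i=0}^{\infty}(\mathcal{P}_{U_A}\mathcal{P}_{\Omega^\bot}\mathcal{P}_{U_A})^i$ converges absolutely in operator norm, with tail bounded by $\psi^N/(1-\psi)$. I would then verify that it produces the desired inverse by a direct telescoping computation: multiplying the partial sum $\sum_{i=0}^{N}(\mathcal{P}_{U_A}\mathcal{P}_{\Omega^\bot}\mathcal{P}_{U_A})^i$ by $\mathcal{I}-\mathcal{P}_{U_A}\mathcal{P}_{\Omega^\bot}\mathcal{P}_{U_A}$ leaves $\mathcal{I}-(\mathcal{P}_{U_A}\mathcal{P}_{\Omega^\bot}\mathcal{P}_{U_A})^{N+1}$, which tends to $\mathcal{I}$ as $N\to\infty$ since the remainder has operator norm at most $\psi^{N+1}$. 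Injectivity of $\mathcal{P}_{U_A}\mathcal{P}_{\Omega}\mathcal{P}_{U_A}$ as a map from $\mathcal{P}_{U_A}$ to $\mathcal{P}_{U_A}$ is then an immediate corollary, since the existence of a (two-sided) inverse on the subspace rules out any nontrivial kernel.

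The only care required is notational bookkeeping: the series in the statement is written as $\mathcal{I}+\sum_{i\geq 1}(\mathcal{P}_{U_A}\mathcal{P}_{\Omega^\bot}\mathcal{P}_{U_A})^i$, which matches the more familiar form $\sum_{i\geq 0}(\mathcal{P}_{U_A}\mathcal{P}_{\Omega^\bot}\mathcal{P}_{U_A})^i$ under the convention that the $i=0$ term is the identity on the range of $\mathcal{P}_{U_A}$ (where $\mathcal{P}_{U_A}$ itself coincides with $\mathcal{I}$). There is no genuine obstacle here; the argument is a textbook application of the Neumann series, with the previous lemma doing all of the probabilistic work to guarantee the hypothesis $\psi<1$.
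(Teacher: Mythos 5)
Your proposal is correct and follows essentially the same route as the paper: both rewrite $\mathcal{P}_{U_A}\mathcal{P}_{\Omega}\mathcal{P}_{U_A}$ as $\mathcal{P}_{U_A}(\mathcal{I}-\mathcal{P}_{U_A}\mathcal{P}_{\Omega^\bot}\mathcal{P}_{U_A})$ on the range of $\mathcal{P}_{U_A}$ and invert via the convergent Neumann series, verified by the same telescoping computation. No gaps.
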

\begin{proof}
By $\|\mathcal{P}_{U_A}\mathcal{P}_{\Omega^\bot}\mathcal{P}_{U_A}\|=\psi<1$, we have that $\mathcal{I}+\sum_{i=1}^{\infty}(\mathcal{P}_{U_A}\mathcal{P}_{\Omega^\bot}\mathcal{P}_{U_A})^i$ is well defined and has an operator norm not larger than $1/(1-\psi)$.

Note that
\begin{eqnarray*}
&&\mathcal{P}_{U_A}\mathcal{P}_{\Omega}\mathcal{P}_{U_A}=\mathcal{P}_{U_A}(\mathcal{I}-\mathcal{P}_{\Omega^\bot})\mathcal{P}_{U_A}\\
&&=\mathcal{P}_{U_A}(\mathcal{I}-\mathcal{P}_{U_A}\mathcal{P}_{\Omega^\bot}\mathcal{P}_{U_A}).
\end{eqnarray*}
Thus for any $M\in\mathcal{P}_{U_A}$ the following holds:
\begin{eqnarray*}
&&\mathcal{P}_{U_A}\mathcal{P}_{\Omega}\mathcal{P}_{U_A}(\mathcal{I}+\sum_{i=1}^{\infty}(\mathcal{P}_{U_A}\mathcal{P}_{\Omega^\bot}\mathcal{P}_{U_A})^i)(M)\\
&=&\mathcal{P}_{U_A}(\mathcal{I}-\mathcal{P}_{U_A}\mathcal{P}_{\Omega^\bot}\mathcal{P}_{U_A})\\
&&(\mathcal{I}+\sum_{i=1}^{\infty}(\mathcal{P}_{U_A}\mathcal{P}_{\Omega^\bot}\mathcal{P}_{U_A})^i)(M)\\
&=&\mathcal{P}_{U_A}(\mathcal{I}+\sum_{i=1}^{\infty}(\mathcal{P}_{U_A}\mathcal{P}_{\Omega^\bot}\mathcal{P}_{U_A})^i-\mathcal{P}_{U_A}\mathcal{P}_{\Omega^\bot}\mathcal{P}_{U_A}\\
&&-\sum_{i=2}^{\infty}(\mathcal{P}_{U_A}\mathcal{P}_{\Omega^\bot}\mathcal{P}_{U_A})^i)(M)\\
&=&\mathcal{P}_{U_A}(\mathcal{I}+\sum_{i=1}^{\infty}(\mathcal{P}_{U_A}\mathcal{P}_{\Omega^\bot}\mathcal{P}_{U_A})^i\\
&&-\sum_{i=1}^{\infty}(\mathcal{P}_{U_A}\mathcal{P}_{\Omega^\bot}\mathcal{P}_{U_A})^i)(M)\\
&=&\mathcal{P}_{U_A}(M) = M.
\end{eqnarray*}
\end{proof}

The next lemma finishes to prove Theorem \ref{thm:noiseless}.
\begin{lemma}\label{lem:dual}
If $\|P_{U_A}P_{\Omega^\bot}P_{U_A}\|<1$, which follows from $|\Omega|>\delta{}mn$, then $Z^*=A^+L_0$ is the unique optimal solution to the convex optimization problem \eqref{eq:mc}.
\end{lemma}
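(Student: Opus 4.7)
The plan is to split the argument into three pieces: first I verify that $Z^*=A^+L_0$ is feasible; next I show that every feasible $\hat Z$ must satisfy $A\hat Z=L_0$; and finally I show that $\hat Z=Z^*$ is the only such point attaining the minimum nuclear norm.

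Feasibility will be immediate from the hypothesis: $\mathcal{P}_{U_A}(U_0)=U_0$ gives $AA^+L_0=U_AU_A^TL_0=L_0$, so $\mathcal{P}_{\Omega}(X-AZ^*)=\mathcal{P}_{\Omega}(X-L_0)=0$.

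For the middle step, I would take any feasible $\hat Z$ and set $\Delta=L_0-A\hat Z$. Feasibility yields $\mathcal{P}_{\Omega}(\Delta)=0$, so $\Delta\in\mathcal{P}_{\Omega^\bot}$; meanwhile both $L_0$ (by $U_0\subset U_A$) and $A\hat Z$ (trivially) have column spaces in the span of $U_A$, so $\Delta=\mathcal{P}_{U_A}(\Delta)$. Invoking Lemma~\ref{lem:inverse}, the restriction of $\mathcal{P}_{U_A}\mathcal{P}_{\Omega}\mathcal{P}_{U_A}$ to $\mathcal{P}_{U_A}$ is invertible under the standing hypothesis $\|\mathcal{P}_{U_A}\mathcal{P}_{\Omega^\bot}\mathcal{P}_{U_A}\|<1$. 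Applying this operator to $\Delta$ gives $\mathcal{P}_{U_A}\mathcal{P}_{\Omega}\mathcal{P}_{U_A}(\Delta)=\mathcal{P}_{U_A}\mathcal{P}_{\Omega}(\Delta)=0$, and invertibility forces $\Delta=0$, i.e.\ $A\hat Z=L_0=AZ^*$.

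For the last step I would write $\hat Z=Z^*+N$ with $AN=0$; since $U_A\Sigma_AV_A^TN=0$ and $\Sigma_A$ is nonsingular, $V_A^TN=0$. At the same time $Z^*=V_A\Sigma_A^{-1}U_A^TL_0$ satisfies $V_AV_A^TZ^*=Z^*$. The cross terms in $\hat Z^T\hat Z=(Z^*+N)^T(Z^*+N)$ therefore vanish, as $Z^{*T}N=Z^{*T}V_AV_A^TN=0$, giving $\hat Z^T\hat Z=Z^{*T}Z^*+N^TN$. Weyl's monotonicity then yields $\sigma_i^2(\hat Z)\geq\sigma_i^2(Z^*)$ coordinatewise, and $\mathrm{tr}(N^TN)>0$ whenever $N\neq 0$ forces at least one inequality to be strict. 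Summing square roots gives $\|\hat Z\|_*>\|Z^*\|_*$ for $N\neq 0$, establishing that $Z^*$ is the unique optimum.

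The main obstacle will be the middle step: converting the two feasibility constraints on $\hat Z$ into the subspace statement $\Delta\in\mathcal{P}_{U_A}\cap\mathcal{P}_{\Omega^\bot}$ and collapsing that intersection to zero is exactly what Lemma~\ref{lem:pupo} and Lemma~\ref{lem:inverse} were designed to supply. The feasibility and uniqueness steps are essentially bookkeeping about the SVD of $A$ and its null space.
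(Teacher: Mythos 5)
Your proof is correct, and it takes a genuinely different route from the paper's. The paper first certifies optimality by constructing an explicit dual variable $Y=\mathcal{P}_{\Omega}\mathcal{P}_{U_A}(\mathcal{I}+\sum_{i\geq1}(\mathcal{P}_{U_A}\mathcal{P}_{\Omega^\bot}\mathcal{P}_{U_A})^i)((A^T)^+UV^T)$ satisfying $A^T\mathcal{P}_{\Omega}(Y)=UV^T\in\partial\|A^+L_0\|_*$, and only then argues uniqueness by a perturbation $Z=A^+L_0+\Delta$; you dispense with the dual certificate entirely and show directly that every feasible point other than $Z^*$ has strictly larger objective, which yields optimality and uniqueness in one stroke. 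Your middle step is the same mechanism as the paper's, just phrased differently: the paper deduces $A\Delta\in\Omega^c\cap U_A=\{0\}$ from $\|\mathcal{P}_{U_A}\mathcal{P}_{\Omega^\bot}\mathcal{P}_{U_A}\|<1$, while you run the equivalent injectivity argument through Lemma~\ref{lem:inverse}; either way the feasible set collapses to $\{Z:AZ=L_0\}$. Where you genuinely diverge is the final step: the paper rotates into the SVD frame of $A^+L_0$ and invokes a block lower-triangular nuclear-norm inequality (whose equality case it asserts somewhat tersely, and with a typo in the condition $(U^\bot)^T\Delta V^\bot=0$), whereas you exploit $V_A^TN=0$ to get the Gram decomposition $\hat{Z}^T\hat{Z}=Z^{*T}Z^*+N^TN$ and conclude via Weyl monotonicity plus a trace argument. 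Your version is arguably cleaner and more self-contained — it actually shows that $A^+L_0$ dominates every other preimage of $L_0$ singular-value-by-singular-value, so the same proof would work for any strictly monotone unitarily invariant norm — while the paper's dual-certificate machinery is the form that generalizes to settings where the feasible set does not collapse to a single value of $AZ$. Both proofs rely on the same two facts: $\Omega^c\cap U_A=\{0\}$ and the minimality of the pseudo-inverse solution among all solutions of $AZ=L_0$.
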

\begin{proof}
By $U_0\subset{}U_A$, $Z^*=A^{+}L_0$ is feasible to \eqref{eq:mc}. By standard convexity arguments \cite{book:convex}, $Z^*=A^{+}L_0$ is an optimal solution to \eqref{eq:mc} if there exists a dual vector (or Lagrange multiplier) $Y$ such that
\begin{eqnarray*}
A^T\mathcal{P}_{\Omega}(Y)\in\partial\|A^{+}L_0\|_*,
\end{eqnarray*}
where $\partial(\cdot)$ is the sub-gradient of a function. Let the SVD of $A^{+}L_0$ be $U\Sigma{}V^T$. Then we define $Y$ as
\begin{eqnarray*}
Y = \mathcal{P}_{\Omega}\mathcal{P}_{U_A}(\mathcal{I}+\sum_{i=1}^{\infty}(\mathcal{P}_{U_A}\mathcal{P}_{\Omega^\bot}\mathcal{P}_{U_A})^i)((A^T)^+UV^T).
\end{eqnarray*}
With this notation, we have
\begin{eqnarray*}
A^T\mathcal{P}_{\Omega}(Y)&=&A^T\mathcal{P}_{U_A}\mathcal{P}_{\Omega}(Y)\\
&=&A^T\mathcal{P}_{U_A}\mathcal{P}_{\Omega}\mathcal{P}_{U_A}\\
&&(\mathcal{I}+\sum_{i=1}^{\infty}(\mathcal{P}_{U_A}\mathcal{P}_{\Omega^\bot}\mathcal{P}_{U_A})^i)((A^T)^+UV^T)\\
&=&A^T(A^T)^+UV^T=V_AV_A^TUV^T\\
&=&UV^T\in\partial\|A^{+}L_0\|_*,
\end{eqnarray*}
which gives that $Z=A^+L_0$ is an optimal solution to the convex optimization problem \eqref{eq:mc}.

It remains to prove that the optimal solution to \eqref{eq:mc} is unique. We shall consider a feasible perturbation $Z=A^{+}L_0+\Delta$ and show that the objective strictly increases whenever $\Delta\neq0$. By
\begin{eqnarray*}
0&=&\mathcal{P}_{\Omega}(X-A(A^{+}L_0))\\
&=&\mathcal{P}_{\Omega}(X-A(A^{+}L_0+\Delta)),
\end{eqnarray*}
we have
\begin{eqnarray*}
\mathcal{P}_{\Omega}(A\Delta) = 0, &\textrm{i.e.,}& A\Delta\in\Omega^c.
\end{eqnarray*}
Then, by $A\Delta\in{}U_A$, we have $A\Delta\in{}\Omega^c\cap{}U_A$. This, together with the assumption $\Omega^c\cap{}U_A=\{0\}$, gives
\begin{eqnarray*}
A\Delta = 0,&\textrm{i.e.,}& \Delta\in{}V_A^\bot\subset{}U^\bot,
\end{eqnarray*}
where $(\cdot)^\bot$ denotes the orthogonal complement of an orthonormal matrix.

We also have
\begin{eqnarray*}
\|A^+L_0+\Delta\|_*&=&\|\left[\begin{array}{c}
U^T\\
(U^{\bot})^T\\
\end{array}
\right]
(A^+L_0+\Delta)[V,V^{\bot}]\|_*\\
&=&\norm{\left[\begin{array}{cc}
U^TA^+L_0V&0\\
(U^{\bot})^T\Delta{}V&(U^{\bot})^T\Delta{}V^{\bot}\\
\end{array}\right]}_*\\
&\geq&\norm{U^TA^+L_0V}_*\\
&=&\|A^+L_0\|_*,
\end{eqnarray*}
where the equality can hold if and only if
\begin{eqnarray*}
(U^{\bot})^T\Delta{}V=0 &\textrm{and}&(U^{\bot})^T\Delta{}V^{\bot}.
\end{eqnarray*}
This gives $(U^{\bot})^T\Delta=0$, i.e., $\Delta\in{}U$. However, we have already proven $\Delta\in{}U^{\bot}$. Thus, the inequality $\|A^+L_0+\Delta\|_*>\|A^+L_0\|_*$ strictly holds unless $\Delta=0$. In other words, $Z^*=A^+L_0$ is the unique optimal solution to \eqref{eq:mc}.
\end{proof}
\subsection{Proof of Theorem \ref{thm:noisy}}
\begin{proof}
By triangle inequality,
\begin{eqnarray*}
\|\mathcal{P}_{\Omega}(AZ^*-L_0)\|_F&=&\|\mathcal{P}_{\Omega}(AZ^*-X)\\
&&+\mathcal{P}_{\Omega}(X-L_0)\|_F\\
&\leq&\|\mathcal{P}_{\Omega}(AZ^*-X)\|_F\\
&&+\|\mathcal{P}_{\Omega}(X-L_0)\|_F\\
&\leq&2\epsilon.
\end{eqnarray*}
Since $U_0\subset{}U_A$, $AZ^*-L_0\in\mathcal{P}_{U_A}$. By the invertibility of $\mathcal{P}_{U_A}\mathcal{P}_{\Omega}\mathcal{P}_{U_A}$,
\begin{eqnarray*}
AZ^*-L_0&=&(\mathcal{I}+\sum_{i=1}^{\infty}(\mathcal{P}_{U_A}\mathcal{P}_{\Omega^\bot}\mathcal{P}_{U_A})^i)\\
&&\mathcal{P}_{U_A}\mathcal{P}_{\Omega}\mathcal{P}_{U_A}(AZ^*-L_0),
\end{eqnarray*}
where the validity (with probability at least $1-n_1^{-10}$) of $\mathcal{I}+\sum_{i=1}^{\infty}(\mathcal{P}_{U_A}\mathcal{P}_{\Omega^\bot}\mathcal{P}_{U_A})^i$ is from Lemma \ref{lem:inverse}.

It could be calculated that
\begin{eqnarray*}
\|AZ^*-L_0\|_F&=&\|(\mathcal{I}+\sum_{i=1}^{\infty}(\mathcal{P}_{U_A}\mathcal{P}_{\Omega^\bot}\mathcal{P}_{U_A})^i)\\
&&\mathcal{P}_{U_A}\mathcal{P}_{\Omega}\mathcal{P}_{U_A}(AZ^*-L_0)\|_F\\
&\leq&\|(\mathcal{I}+\sum_{i=1}^{\infty}(\mathcal{P}_{U_A}\mathcal{P}_{\Omega^\bot}\mathcal{P}_{U_A})^i)\|\\
&&\times\|\mathcal{P}_{U_A}\mathcal{P}_{\Omega}\mathcal{P}_{U_A}(AZ^*-L_0)\|_F\\
&\leq&\frac{\|\mathcal{P}_{U_A}\mathcal{P}_{\Omega}\mathcal{P}_{U_A}(AZ^*-L_0)\|_F}{\rho_0-\delta}\\
&=&\frac{\|\mathcal{P}_{U_A}\mathcal{P}_{\Omega}(AZ^*-L_0)\|_F}{\rho_0-\delta}\\
&\leq&\frac{\|\mathcal{P}_{\Omega}(AZ^*-L_0)\|_F}{\rho_0-\delta}\\
&\leq&\frac{2\epsilon}{\rho_0-\delta}\\
&\leq&\frac{2\epsilon}{\delta},
\end{eqnarray*}
where the last inequality is concluded from the condition $\rho_0=|\Omega|/(mn)\geq{}2\delta$.
\end{proof}

\newpage\clearpage


\begin{thebibliography}{1}

\bibitem{Candes:2009:math}
E.~Cand{\`e}s and B.~Recht, ``Exact Matrix Completion via Convex
  Optimization,'' \emph{Foundations of Computational Mathematics}, vol.~9,
  no.~6, pp. 717--772, 2009.

\bibitem{johnson:1990:mc}
C.~R. Johnson, ``Matrix Completion Problems: A survey,'' \emph{Matrix Theory
  and Applications}, pp. 171--176, 1990.

\bibitem{CandesPIEEE}
E.~Cand{\`e}s and Y.~Plan, ``Matrix Completion with Noise,''  \emph{IEEE
  Proceeding}, vol.~98, pp. 925--936, 2010.

\bibitem{Mohan:2010:isit}
K.~Mohan and M.~Fazel, ``New Restricted Isometry Results for Noisy Low-Rank
  Recovery,'' Proc. \emph{IEEE International Symposium on Information Theory},
  pp. 1573--1577, 2010.

\bibitem{Recht2008}
B.~Recht, W.~Xu, and B.~Hassibi, ``Necessary and Sufficient Conditions for
  Success of the Nuclear Norm Heuristic for Rank Minimization,'' \emph{CalTech, Technical
  Report}, 2008.

\bibitem{Negahban:2012:JMLR}
S.~Negahban and M.~J. Wainwright, ``Restricted Strong Convexity and Weighted
  Matrix Completion: Optimal Bounds with Noise,'' \emph{Journal of Machine
  Learning Research}, vol.~13, pp. 1665--1697, 2012.

\bibitem{nips:WeimerKLS07}
M.~Weimer, A.~Karatzoglou, Q.~V. Le, and A.~J. Smola, ``Cofi Rank - Maximum
  Margin Matrix Factorization for Collaborative Ranking,'' Proc. \emph{Neural
  Information Processing Systems}, 2007.

\bibitem{rahul:jlmr:2010}
R.~Mazumder, T.~Hastie, and R.~Tibshirani, ``Spectral Regularization Algorithms
  for Learning Large Incomplete Matrices,'' \emph{Journal of Machine Learning
  Research}, vol.~11, pp. 2287--2322, 2010.

\bibitem{phd_2002_nuclear}
M.~Fazel, ``Matrix Rank Minimization with Applications,'' \emph{PhD thesis},
  2002.

\bibitem{siam_2010_minirank}
B.~Recht, M.~Fazel, and P.~Parrilo, ``Guaranteed Minimum-Rank Solutions of
  Linear Matrix equations via Nuclear Norm Minimization,'' \emph{SIAM Review},
  vol.~52, no.~3, pp. 471--501, 2010.

\bibitem{ijcv_1998_factor}
J.~Costeira and T.~Kanade, ``A Multibody Factorization Method for Independently
  Moving Objects,'' \emph{International Journal of Computer Vision}, vol.~29,
  no.~3, pp. 159--179, 1998.

\bibitem{icml_2010_lrr}
G.~Liu, Z.~Lin, and Y.~Yu, ``Robust Subspace Segmentation by Low-Rank
  Representation,'' Proc. \emph{International Conference on Machine Learning},
  pp. 663--670, 2010.

\bibitem{tpami_2013_lrr}
G.~Liu, Z.~Lin, S.~Yan, J.~Sun, Y.~Yu, and Y.~Ma, ``Robust Recovery of Subspace
  Structures by Low-Rank Representation,'' \emph{IEEE Transactions on Pattern
  Recognition and Machine Intelligence}, vol.~35, no.~1, pp. 171--184, 2013.


\bibitem{Srebro05generalizationerror}
N.~Srebro and T.~Jaakkola, ``Generalization Error Bounds for Collaborative
  Prediction with Low-Rank Matrices,'' Proc. \emph{Neural Information Processing
  Systems}, pp. 5--27, 2005.

\bibitem{Proc:Wang_Li_SDM10}
F.~Wang and P.~Li, ``Efficient Nonnegative Matrix Factorization with Random
  Projections,'' Proc. \emph{SIAM International Conference on Data Mining},
  pp. 281--292, 2010.

\bibitem{Candes:2009:JournalACM}
E.~J. Cand\`{e}s, X.~Li, Y.~Ma, and J.~Wright, ``Robust Principal Component
  Analysis?'' \emph{Journal of the ACM}, vol.~58, no.~3, pp. 1--37, 2011.

\bibitem{xu:2010:nips}
H.~Xu, C.~Caramanis, and S.~Sanghavi, ``Robust PCA via Outlier
  Pursuit,'' Proc. \emph{Neural Information Processing Systems}, 2010.

\bibitem{Rudelson99randomvectors}
M.~Rudelson, ``Random Vectors in the Isotropic Position,'' \emph{Journal of
  Functional Analysis}, pp. 60--72, 1999.

\bibitem{book:convex}
R.~Rockafellar, \emph{Convex Analysis}, Princeton University Press, 1970.

\bibitem{hopkin155}
R. Tron and R. Vidal, ``A Benchmark for the Comparison of 3-D Motion Segmentation Algorithms,'' Proc. \emph{IEEE Conference on Computer Vision and Pattern Recognition}, pp. 1--8, 2007.

\bibitem{cvpr_2009_ssc}
E. Elhamifar and R. Vidal, ``Sparse Subspace Clustering,'' Proc. \emph{IEEE Conference on Computer Vision and Pattern Recognition}, pp. 2790--2797, 2009.

\bibitem{wang:arxiv:2014}
Z. Wang, M. Lai, Z. Lu, W. Fan, H. Davulcu, and J. Ye ``Orthogonal Rank-one Matrix Pursuit for Low Rank Matrix Completion,'' \emph{Arxiv}, 2014.

\bibitem{liuj:tpami:2013}
J.~Liu, P.~Musialski, P.~Wonka, and J.~Ye, ``Tensor Completion for Estimating Missing Values in Visual Data,'' \emph{IEEE Transactions on Pattern
  Recognition and Machine Intelligence}, vol.~35, no.~1, pp. 208--220, 2013.

\bibitem{chen:icml:2011}
Y.~Chen, H.~Xu, C.~Caramanis, and S.~Sanghavi, ``Robust Matrix Completion and Corrupted Columns,'' Proc. \emph{International Conference on Machine Learning}, pp. 873-880, 2011.

\bibitem{chen:icml:2012}
Y.~Wang and H. Xu, ``Stability of Matrix Factorization for Collaborative Filtering,'' Proc. \emph{International Conference on Machine Learning}, pp. 417-424, 2012.

\bibitem{liu:arxiv:2014}
G. Liu and P. Li, ``Recovery of Coherent Data via Low-Rank Dictionary Pursuit,'' \emph{Arxiv}, 2014.
\end{thebibliography}
\end{document}